\newcommand{\reals}{\mathbb{R}}
\newcommand{\exponent}{\operatorname{e}}
\theoremstyle{plain}
\newtheorem{stel}{Theorem}
\newtheorem{lemma}{Lemma}
\theoremstyle{remark}
\begin{document}
\title{Failure of antibiotic treatment in microbial populations}

\author{Patrick De Leenheer\footnote{Department of Mathematics, University of Florida, email: deleenhe@math.ufl.edu. Supported by NSF grant DMS-0614651} and Nick Cogan\footnote{Department of Mathematics, Florida State University, email: cogan@math.fsu.edu.}}

\date{}
\maketitle
\begin{abstract}
The tolerance of bacterial populations to biocidal or antibiotic treatment has been well documented in both biofilm and planktonic settings. However, there is still very little known about the mechanisms that produce this tolerance. Evidence that small, non-mutant  subpopulations of bacteria are not affected by antibiotic challenge has been accumulating and provides an attractive explanation for the failure of typical dosing protocols.  Although a dosing challenge can kill all the susceptible bacteria, the remaining persister cells can serve as a source of population regrowth.  
We give a robust condition for the failure of a periodic dosing protocol 
for a general chemostat model, which supports the mathematical conclusions and 
simulations of 
an earlier, more specialized batch model. Our condition implies that
the treatment protocol fails globally, 
in the sense that a mixed bacterial population will ultimately persist above a level 
that is independent of the initial composition of the population. 
We also give a sufficient condition for treatment success, at least for initial 
population compositions near the steady state of interest, corresponding to 
bacterial washout. 
Finally, we investigate 
how the speed at which the bacteria are wiped out depends on the duration 
of administration of the antibiotic. 
We find that this dependence is not necessarily monotone, implying that optimal dosing 
does not necessarily correspond to continuous administration of the antibiotic. Thus, 
genuine periodic protocols can be more advantageous in treating a wide variety of 
bacterial infections.

\end{abstract}

{\bf Keywords: persister, biofilm, model, chemostat, tolerance}
\section{Introduction}   

The failure of antibiotic treatments to eliminate bacterial infections has become both more evident and better understood in the past several decades. Although there is evidence that the over use of antibiotics has amplified the number of chromosomal-resistant bacteria \cite{resistance_chromo}, it is becoming increasingly clear that there are other mechanisms that protect populations of bacteria. Many of these mechanisms depend on whether the bacteria exist in a biofilm or not \cite{cf, davies, keren, desai}.  In particular, the notion that small sub-populations of bacteria may display innate tolerance to various biocides has been proposed as a possible reason for the failure of treatment for bacterial infections \cite{Lewis, balaban, klapper_senescence, cogan}.  Bacteria within a biofilm are enmeshed in a physical gel that provides a secondary boundary that may allow small numbers of bacteria to evade the 
antibiotic; therefore, the failure to eliminate the entire population can allow the population to regrow.

It should be noted that populations of  planktonic bacteria also contain these highly tolerant or persister cells \cite{desai, lewis_planktonic}.  Thus understanding the process of persister formation and the response of the population to biocidal application is fundamental to developing dosing protocols and treatments in both batch culture and biofilm populations.  

As in many areas of biology, mathematical modeling has been used as a counterpart to experimental observations. 
Because there are several hypotheses regarding the mechanism of persister formation, mathematical modeling can be used to provide insight into the success of failure of treatment protocols as well as the consistency of various hypotheses.  Currently, there are at least two distinct hypotheses concerning persister formation-one of which is senescence.  In this case, persister cells are assumed to be those that have undergone many division cycles. It is known that asymmetric division leads to degradation of parts of the cellular machinery that may be the underlying cause of persistence \cite{senscence1}.  Mathematical analysis of a model of senescence has been described in both chemostat and biofilm settings \cite{klapper_senescence, klapper_senescence2}.

A different hypothesis argues that persisters are a phenotype that is expressed at a rate that depends on the external environment. Although the biological details are not well understood, it is thought that this might be due to toxin/antitoxin interaction or some other stress response \cite{Lewis, keren, balaban, lewis_planktonic,lewis_ta}.  This has been investigated mathematically as well \cite{balaban, cogan, cogan_ta, imran}.  In \cite{cogan}, a very simple model of persister formation was developed and optimal dosing protocols, that entail alternating application and resting, were described.  In \cite{cogan_ta}, toxin/antitoxin interaction was explicitly included and the resulting model was analyzed in a chemostat.  In both of these investigations only one particular form of the kinetics was analyzed. One of the goals of the current investigation is to extend these results to a more general form for  kinetics. This is an important process, because it has been shown that not only is there a successful dosing protocol, there is an optimal protocol. Without knowing how robust the model is, there is no way to determine how robust the conclusions are.  Here we take a model of the form proposed in \cite{cogan} and give 
a condition that yields successful dosing protocols, at least for some  
initial makeups of the population.

The manuscript is organized as follows: We begin by describing the model for the dynamics of the bacterial population in response to antibiotic challenge. We then develop the theory by analyzing two extreme  cases (no dosing and constant dosing) and the intermediate case.  
This leads to a local sufficient condition for treatment success. 
Next we give a condition for global treatment failure, 
supported by 
numerical simulations of the model. We also show that the speed of eradication does 
not necessarily depend monotonically on the duration of the administration of the 
antibiotic.

\section{Model}
Consider the following chemostat model:
\begin{eqnarray}
{\dot B_s}&=&\left[\left(1-k_d(t)-k_l(t)\right)f(S)-D\right]B_s+k_g(t)B_p\label{s1}\\
{\dot B_p}&=&k_l(t)f(S)B_s-[k_g(t)+D]B_p\label{s2}\\
{\dot S}&=&D(S^0-S)-\frac{f(S)B_s}{Y}\label{s3}
\end{eqnarray}
where $B_s$ is the concentration of the cells which are susceptible to antibiotics, $B_p$ is the concentration 
of the persister cells which are not affected by the antibiotic and $S$ is the concentration of the nutrient. 
This model deviates from the one in \cite{cogan} because it is a chemostat model, which 
is reflected in the additional loss terms at rate $D$ (called the dilution rate or washout rate), and 
the inflow (at the same rate $D$) of nutrient with an input concentration $S^0$. 
The per capita growth rate of the susceptible cells is denoted by $f(S)$, for which we assume 
the following throughout the rest of the paper:
$$
f:\reals_+\rightarrow \reals_+\textrm{ is smooth and increasing and } f(0)=0.
$$
The persister cells do not consume nutrient, hence the lack of 
a corresponding growth term in the $B_p$-equation. The conversion of nutrient into new biomass occurs with 
a yield of $Y\in (0,1)$. 

The remaining functions 
$k_d(t),k_l(t)$ and $k_g(t)$ are non-negative, time-varying functions which 
describe the effect of antibiotics on the population. First, $k_d(t)f(S)$ is the 
killing rate of the susceptible population. 
Note in particular that the 
killing rate is proportional to the growth rate of the cells. It is positive when both antibiotic and 
nutrient are present, but zero when either one is missing.
Secondly, $k_l(t)$ is the rate at which susceptible cells switch to persister cells. It is also positive 
when antibiotic is present, but zero when this is not the case. Finally, $k_g(t)$ is the rate at which persister 
cells revert to the susceptible state when antibiotic is absent (and zero when antibiotic is present).

Since antibiotics are administered to the reactor vessel in a controlled (lab) environment, 
we will make the simplifying assumption that the functions $k_d(t),k_l(t)$ and $k_g(t)$ 
are $\tau$-periodic (for some given $\tau>0$), and of the bang-bang type with simultaneous switching instances: 
For some $p\in[0,1]$, and for positive parameters $k_d,k_l$ and $k_g$, there holds that
\begin{equation}\label{forms} 
k_d(t)=\begin{cases}k_d\textrm{ for } t\in[0,p\tau)\\0\textrm{ for }t\in [p\tau,\tau)\end{cases},\;\;
k_l(t)=\begin{cases}k_l\textrm{ for } t\in[0,p\tau)\\0\textrm{ for }t\in [p\tau,\tau)\end{cases},\;\;\textrm{ and }
k_g(t)=\begin{cases}0\textrm{ for } t\in[0,p\tau)\\k_g\textrm{ for }t\in [p\tau,\tau)\end{cases}.
\end{equation}
Thus, antibiotics are present during a fraction $p$ of the period $\tau$, and absent during the remaining 
fraction $1-p$ of the period. 

Clearly, this is a simplification of reality because the concentration of an 
antibiotic is not expected to be of the bang-bang type.  
In a more realistic model, the functions $k_d(t), k_l(t)$ and 
$k_d(t)$ would be replaced by functions depending on (at least) a new state variable for the concentration 
of the antibiotic, and the periodicity would arise through a periodic forcing term in the equation for this 
new variable. We leave the study of such a model to the future.

Throughout the rest of this paper we assume that the net effect of the antibiotic alone is lethal to the 
susceptible population:
\begin{equation}\label{lethal}
1-k_d-k_l<0.
\end{equation}
Note that this assumption is valid for the parameter values related to the experiments described in \cite{cogan}.

The main purpose of this paper is to investigate how the behavior of 
system $(\ref{s1})-(\ref{s3})$ with $(\ref{forms})$ changes qualitatively in terms of $p$.

\section{Preliminary results}
In this section we collect a couple of basic results concerning the dynamical behavior of 
$(\ref{s1})-(\ref{s3})$ with $(\ref{forms})$. For a real-valued function $x(t)$, we denote 
the extended real numbers $\liminf_{t\rightarrow \infty}x(t)$ and 
$\limsup_{t\rightarrow \infty}x(t)$ by $x_{\infty}$ and $x^{\infty}$ respectively.

\begin{lemma}\label{dissipative}
System $(\ref{s1})-(\ref{s3})$ with $(\ref{forms})$ has $\reals^3_+$ as a forward invariant set, and it is dissipative.
\end{lemma}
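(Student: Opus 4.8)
The plan is to establish the two assertions separately: forward invariance by a boundary (quasi-positivity) check, and dissipativity by exhibiting a scalar quantity satisfying a linear differential inequality whose sublevel sets absorb all trajectories.

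For forward invariance I would inspect the vector field of $(\ref{s1})$--$(\ref{s3})$ on each of the three coordinate faces of $\reals^3_+$. On $\{B_s=0\}$ one has $\dot B_s=k_g(t)B_p\ge 0$; on $\{B_p=0\}$ one has $\dot B_p=k_l(t)f(S)B_s\ge 0$; and on $\{S=0\}$, using $f(0)=0$, one gets $\dot S=DS^0\ge 0$. (Note that although the bracket $[(1-k_d(t)-k_l(t))f(S)-D]$ in $(\ref{s1})$ can be negative by $(\ref{lethal})$, it multiplies $B_s$ and hence is irrelevant on $\{B_s=0\}$.) Thus the right-hand side is quasi-positive -- the $i$-th component is nonnegative whenever the state lies in $\reals^3_+$ with its $i$-th coordinate zero -- which is exactly the condition guaranteeing that $\reals^3_+$ is positively invariant. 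Since $(\ref{forms})$ makes the right-hand side only piecewise continuous in $t$, I would phrase this interval by interval, concatenating solutions over the successive intervals $[j\tau,(j+p)\tau]$ and $[(j+p)\tau,(j+1)\tau]$ on each of which the field is smooth in the state (so Carath\'eodory solutions exist and are unique), and observing that invariance on each subinterval propagates to all $t\ge 0$.

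For dissipativity I would use $V:=S+\tfrac{1}{Y}(B_s+B_p)$. Adding $(\ref{s1})$ and $(\ref{s2})$ the $k_l$-switching and $k_g$-reversion terms cancel (they merely transfer mass between $B_s$ and $B_p$), leaving $\dot B_s+\dot B_p=(1-k_d(t))f(S)B_s-D(B_s+B_p)$, and combining with $(\ref{s3})$ the two $f(S)B_s/Y$ terms cancel, giving $\dot V = D(S^0-V)-\tfrac{1}{Y}k_d(t)f(S)B_s \le D(S^0-V)$ on $\reals^3_+$, since $k_d,f,B_s\ge 0$. Because $V$ is continuous and piecewise $C^1$ (again, splitting at the switching instants), the comparison principle yields $V(t)\le S^0+(V(0)-S^0)e^{-Dt}$, so $\limsup_{t\to\infty}V(t)\le S^0$. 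As $S,B_s,B_p$ are nonnegative on the invariant octant, for every $\varepsilon>0$ the bounded set $\{(B_s,B_p,S)\in\reals^3_+ : S+\tfrac{1}{Y}(B_s+B_p)\le S^0+\varepsilon\}$ is absorbing, which is precisely dissipativity (and, together with the a priori bound, also gives global existence of forward solutions).

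The argument is essentially routine; the two points that need a little care are (i) spotting the cancellations that make $V$ a clean Lyapunov function -- one must recognize that the $k_l$ and $k_g$ terms are internal transfers between the two bacterial compartments and drop out of $\dot B_s+\dot B_p$, while the killing term $-k_d(t)f(S)B_s$ only helps the estimate -- and (ii) the time-discontinuity of the right-hand side coming from $(\ref{forms})$, which forces one to run both the invariance check and the comparison estimate on each bang-bang subinterval rather than quoting a smooth-ODE theorem wholesale. I do not expect either to be a genuine obstacle.
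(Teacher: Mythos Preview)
Your argument is correct and follows essentially the same route as the paper: the paper dismisses forward invariance as ``obvious'' and proves dissipativity via $M=B_s+B_p+YS$, which is just $Y$ times your $V$, obtaining the identical inequality $\dot M\le D(YS^0-M)$. Your treatment is in fact more careful than the paper's, spelling out the quasi-positivity check and the handling of the bang-bang switching, but the underlying idea is the same.
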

\begin{proof}
The first assertion is obvious. Dissipativity follows by considering the dynamics of 
$$
M=B_s+B_p+YS,
$$
given by 
$$
{\dot M}=D(YS^0-M)-k_d(t)f(S)B_s\leq D(YS^0-M),
$$
and hence 
$$
M^{\infty}\leq YS^0.
$$
\end{proof}
Not only are all state components of every solution ultimately bounded from above by 
some constant which does not depend on initial conditions, we also 
notice in the following Lemma, that $S(t)$ is ultimately bounded from below by some positive constant which is 
independent of initial conditions.

\begin{lemma}\label{lower}
There is a constant $\theta>0$ such that for all solutions of $(\ref{s1})-(\ref{s3})$ with $(\ref{forms})$, 
there holds that $S_{\infty}\geq \theta$.
\end{lemma}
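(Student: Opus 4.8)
The plan is to combine the differential inequality for $S$ coming from $(\ref{s3})$ with the uniform ultimate upper bound on $B_s$ supplied by Lemma \ref{dissipative}, and then to compare $S$ from below with a scalar linear ODE whose equilibrium is positive and independent of the initial data.

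First I would extract two consequences of Lemma \ref{dissipative}. Since $M=B_s+B_p+YS$ with all three summands nonnegative, we have $B_s(t)\leq M(t)$ and $S(t)\leq M(t)/Y$ for all $t$, so $M^{\infty}\leq YS^0$ yields $B_s^{\infty}\leq YS^0$ and $S^{\infty}\leq S^0$. Fix any $\epsilon>0$ and set $\bar B:=YS^0+\epsilon$ and $\bar S:=S^0+\epsilon$; then for every solution there is a time $T$ (depending on the solution) such that $B_s(t)\leq \bar B$ and $S(t)\leq \bar S$ for all $t\geq T$. Because $f$ is smooth and $f(0)=0$, the mean value theorem gives $f(S)\leq LS$ for all $S\in[0,\bar S]$, where $L:=\max_{s\in[0,\bar S]}f'(s)<\infty$.

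Substituting these bounds into $(\ref{s3})$, for $t\geq T$ we get
$$
{\dot S}\;\geq\; D(S^0-S)-\frac{L\bar B}{Y}\,S\;=\;DS^0-\Big(D+\frac{L\bar B}{Y}\Big)S.
$$
The scalar linear equation ${\dot y}=DS^0-(D+L\bar B/Y)\,y$ has all solutions converging to $\theta:=DS^0/(D+L\bar B/Y)>0$, so the comparison principle applied on $[T,\infty)$ with $y(T)=S(T)$ gives $S(t)\geq y(t)$ for $t\geq T$, hence $S_{\infty}\geq \theta$. Since $\theta$ is built only from $D$, $S^0$, $Y$ and $f$ (through $L$), it is independent of the solution, which is exactly the assertion.

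The argument is essentially routine; the only point that requires a little care --- and the reason the statement is phrased with $\liminf$ rather than a bound holding for all $t$ --- is that the upper bound $\bar B$ on $B_s$ is only an ultimate bound, so the differential inequality for $S$ holds only on $[T,\infty)$. This is harmless because $S_{\infty}$ is insensitive to the behavior of the solution on the initial interval $[0,T]$.
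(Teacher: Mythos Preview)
Your proof is correct, but it follows a different route from the paper's. The paper defines $\theta$ implicitly as the unique root in $(0,S^0)$ of $g(x):=f(x)S^0-D(S^0-x)$, and then argues by contradiction using the Fluctuation Lemma (Corollary~2.4 in \cite{thieme}) applied to equation $(\ref{s3})$: along a sequence $t_n\to\infty$ with $S(t_n)\to S_\infty$ and $\dot S(t_n)\to 0$, the bound $B_s^\infty\le YS^0$ forces $0\ge D(S^0-S_\infty)-f(S_\infty)S^0$, which is incompatible with $S_\infty<\theta$ by monotonicity of $g$. Your argument instead linearizes the consumption term via the Lipschitz estimate $f(S)\le LS$ on a compact interval and compares $S$ from below with an explicit scalar linear ODE, yielding the explicit constant $\theta=DS^0/(D+L\bar B/Y)$. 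The trade-off is that your approach is more elementary---no Fluctuation Lemma, just a standard comparison principle---but produces a generally smaller $\theta$, since you replace $f$ by its Lipschitz majorant rather than exploiting $f(S_\infty)$ directly. For the purposes of the paper either constant suffices, as only positivity and independence of the initial data are used downstream (in the proof of Theorem~\ref{main-persist}).
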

\begin{proof}
Consider the function $g(x):=f(x)S^0-D(S^0-x)$. Then $g$ is increasing with $g(0)<0$ and $g(S^0)>0$, hence 
by the intermediate value theorem, there is a unique $\theta\in (0,S^0)$ such that $g(\theta)=0$. We will show that 
$S_{\infty}\geq \theta$. If not, then since $B_s^{\infty}\leq YS^0$ by the proof of Lemma $\ref{dissipative}$, it follows 
from Corollary $2.4$ in \cite{thieme} -a consequence of the famous Fluctuation Lemma- applied to $(\ref{s2})$ that
\begin{eqnarray*}
0&\geq& \liminf_{t\rightarrow \infty} \left[ D(S^0-S_{\infty})-\frac{f(S_{\infty})B_s(t)}{Y} \right]\\
&\geq & D(S^0-S_{\infty})-f(S_{\infty})S^0\\
&>&D(S^0-\theta)-f(\theta)S^0,
\end{eqnarray*}
which contradicts that $g(\theta)=0$.
\end{proof}

\section{Analysis of the extreme cases $p=0$ and $p=1$.}
First we study the cases where antibiotic is either present or absent for all times. 
Our conclusions are the expected ones: 
When antibiotic is present continuously, all susceptible cells are killed, and consequently the persisters die out as 
well since they can only survive if susceptible cells become persisters. 
When the population is never exposed to antibiotics, 
then all persisters revert to the susceptible state, and ultimately the population will consist entirely of 
susceptible cells (provided the dilution rate is not too high).
\begin{lemma}\label{extreme}
\begin{enumerate}
\item Case $p=1$ ({\bf continuous antibiotic dosing}). The steady state $(B_s,B_p,S)=(0,0,S^0)$ 
of system $(\ref{s1})-(\ref{s3})$ with $(\ref{forms})$ is globally asymptotically stable.
\item Case $p=0$ ({\bf never antibiotic dosing}). If $D<f(S^0)$, then all solutions $(B_s(t),B_p(t),S(t))$ 
with $B_s(0)+B_p(0)>0$ converge to the steady state $(B_s,B_p,S)=(B_s^*,0,S^*)$, where 
$S^*\in (0,S^0)$ is the unique positive value satisfying $f(S^*)=D$, and $B_s^*=(S^0-S^*)Y$.
\end{enumerate}
\end{lemma}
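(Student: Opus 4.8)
The plan is to treat the two cases separately, exploiting the fact that the system becomes autonomous in each extreme.

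\medskip

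\noindent\textbf{Case $p=1$.} Here $k_d(t)\equiv k_d$, $k_l(t)\equiv k_l$, $k_g(t)\equiv 0$, so the system is autonomous. First I would note that the $B_p$-equation reads $\dot B_p=k_lf(S)B_s-DB_p$, and since $B_s,S$ are bounded above by the dissipativity estimate of Lemma~\ref{dissipative}, the key is to show $B_s(t)\to 0$. For this I would use assumption \eqref{lethal}: from \eqref{s1}, $\dot B_s\le[(1-k_d-k_l)f(S)-D]B_s+k_g(t)B_p$, but with $k_g\equiv0$ we get $\dot B_s\le[(1-k_d-k_l)f(S)-D]B_s$. Since by Lemma~\ref{lower} we have $S_\infty\ge\theta>0$, eventually $f(S(t))\ge f(\theta)/2>0$ (say), and hence $\dot B_s\le -cB_s$ for some $c>0$ for all large $t$, forcing $B_s(t)\to0$ exponentially. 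Then a standard asymptotically-autonomous / variation-of-constants argument on \eqref{s2} gives $B_p(t)\to0$, and feeding $B_s\to0$ into \eqref{s3} gives $\dot S= D(S^0-S)-f(S)B_s/Y\to D(S^0-S)$ in the limit, so $S(t)\to S^0$. To get genuine convergence (not just of the $\limsup$), I would invoke an asymptotically autonomous systems result (e.g.\ Thieme, as already cited) with limit equation $\dot S=D(S^0-S)$, whose global attractor is $S^0$.

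\medskip

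\noindent\textbf{Case $p=0$.} Now $k_d\equiv k_l\equiv0$, $k_g\equiv k_g$, and again the system is autonomous. This is essentially the classical chemostat with an extra reversible compartment. I would first observe that $\Sigma:=B_s+B_p$ satisfies $\dot\Sigma = f(S)B_s - D\Sigma$ and $M=B_s+B_p+YS$ satisfies $\dot M=D(YS^0-M)$, so $M(t)\to YS^0$, i.e.\ the dynamics collapses onto the invariant plane $B_s+B_p+YS=YS^0$ (using convergence to this plane via asymptotically autonomous reduction). On that plane, with $S=S^0-\Sigma/Y$, the reduced planar system in $(B_s,B_p)$ is cooperative/competitive-type; more concretely, I would try to show $B_p(t)\to0$: since persisters are only fed by susceptibles via $k_l=0$ here — wait, that term is absent, so $\dot B_p=-( k_g+D)B_p$, giving $B_p(t)\to0$ immediately and exponentially. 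Excellent: that simplifies everything. Then $\Sigma$ and $S$ are governed, in the limit, by the standard chemostat $\dot B_s=(f(S)-D)B_s + k_gB_p$ with $B_p\to0$, i.e.\ asymptotically $\dot B_s=(f(S)-D)B_s$, coupled with $\dot S=D(S^0-S)-f(S)B_s/Y$ on the plane $B_s+YS=YS^0$. The condition $D<f(S^0)$ guarantees the washout steady state is unstable and the interior steady state $(B_s^*,0,S^*)$ with $f(S^*)=D$, $B_s^*=Y(S^0-S^*)$ exists in the positive orthant; a one-dimensional argument (the reduced flow on the segment is monotone, with $S^*$ the unique interior zero) then gives global convergence from any initial condition with $B_s(0)+B_p(0)>0$.

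\medskip

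\noindent\textbf{Main obstacle.} The routine parts are the a~priori bounds (already done) and the limiting ODE analysis. The delicate point is justifying the passage to the limit rigorously: each case is an autonomous system, but the argument naturally proceeds by showing one variable ($B_s$, resp.\ $B_p$) tends to zero and then analyzing the remaining \emph{asymptotically autonomous} subsystem; one must check the hypotheses (chain-recurrence / boundedness / the limit system having a globally attracting equilibrium) that let one conclude convergence for the full system rather than mere boundedness of $\liminf$/$\limsup$. This is exactly where Thieme-type results on asymptotically autonomous systems (already in the bibliography) are needed, and care is required because the $S$-equation's forcing term $f(S)B_s/Y$ must be shown to vanish uniformly, not just along a subsequence.
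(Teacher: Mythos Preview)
Your proposal is correct and follows essentially the same route as the paper: in each extreme case, show one cell population decays to zero directly from its ODE, then invoke an asymptotically-autonomous result to handle the remaining two-dimensional limiting system (the paper cites Theorem~F.1 in \cite{smith-waltman} for this step, which plays the role of your Thieme-type reference). One small simplification you missed in Case $p=1$: since $(1-k_d-k_l)<0$ and $f(S)\ge 0$, the inequality $\dot B_s\le -DB_s$ holds for \emph{all} $t$, so there is no need to invoke Lemma~\ref{lower} to get an eventual lower bound on $f(S)$.
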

\begin{proof}
\begin{enumerate}
\item
If $p=1$, then $k_d(t)\equiv k_d$, $k_l(t)\equiv k_l$ and $k_g(t)\equiv 0$ for all $t$. By $(\ref{lethal})$ 
it follows that for all solutions,  
${\dot B_s}\leq -DB_s$, and hence $B_s(t)\rightarrow 0$ as $t\rightarrow \infty$. This suggests that 
we should study the linear limiting system 
\begin{eqnarray*}
{\dot B_p}&=&-DB_p\\
{\dot S}&=&D(S^0-S)
\end{eqnarray*}
whose solutions clearly converge to $(B_p,S)=(0,S^0)$. The conclusion now follows 
immediately by applying Theorem F.1 in \cite{smith-waltman}.
\item
If $p=0$, then $k_d(t)=k_l(t)\equiv 0$ and $k_g(t)=k_g$ for all $t$. 
Then the restriction on the initial condition implies that $B_s(t)>0$ for all $t>0$. 
Notice that $B_p(t)\rightarrow 0$ as $t\rightarrow \infty$, suggesting we should study the limiting system
\begin{eqnarray*}
{\dot B_s}&=&[f(S)-D]B_s\\
{\dot S}&=&D(S^0-S)-\frac{f(S)B_s}{Y}
\end{eqnarray*}
This is the classical chemostat model of a single cell population growing on a single nutrient, described in 
\cite{smith-waltman}: all solutions with $B_s(0)>0$ converge to $(B_s,S)=(B_s^*,S^*)$ under the assumptions of the 
Lemma. The conclusion follows again by Theorem F.1 in \cite{smith-waltman}.
\end{enumerate}
\end{proof}

\section{The case of periodic dosing: $p\in (0,1)$.}
In this section we deal with the $\tau$-periodic model $(\ref{s1})-(\ref{s3})$ with $(\ref{forms})$, 
assuming that $p\in(0,1)$. 
We will also assume that the dilution rate is not too large, as in Lemma $\ref{extreme}$:
\begin{equation}\label{small}
D<f(S^0).
\end{equation}

Notice that system $(\ref{s1})-(\ref{s3})$ with $(\ref{forms})$ 
has a steady state $E_0:=(B_s,B_p,S)=(0,0,S_0)$, regardless of the value of $p$.
To determine its stability properties, we determine 
the $\tau$-periodic variational equation at $E_0$:
$$
{\dot z}=\begin{pmatrix}(1-k_d(t)-k_l(t))f(S^0)-D&k_g(t)&0\\f(S^0)k_l(t)&-(k_g(t)+D)&0\\-\frac{f(S^0)}{Y}&0&-D\end{pmatrix}z
$$
and the Floquet multipliers of this system are $\exponent^{-D\tau}$ (which is of course inside the unit circle of 
the complex plane) and the Floquet multipliers of
$$
{\dot x}=\begin{pmatrix}(1-k_d(t)-k_l(t))f(S^0)-D&k_g(t)\\f(S^0)k_l(t)&-(k_g(t)+D)\end{pmatrix}x
$$
Using $(\ref{forms})$, these Floquet multipliers are the eigenvalues of the following matrix:
\begin{equation}\label{transition}
\Phi:=\exponent^{(1-p)\tau A_2}\exponent^{p\tau A_1},
\end{equation}
where
\begin{equation}\label{matrices}
A_1:=\begin{pmatrix}(1-k_d-k_l)f(S^0)-D&0\\k_lf(S^0)&-D\end{pmatrix},\textrm{ and } 
A_2:=\begin{pmatrix}f(S^0)-D&k_g\\0&-(k_g+D) \end{pmatrix}
\end{equation}
are triangular, quasimonotone matrices. Notice also that $A_1$ is Hurwitz by $(\ref{lethal})$, 
and that $A_2$ has one negative and one positive eigenvalue by $(\ref{small})$.

Since $A_1$ and $A_2$ are quasi-monotone (i.e. their off-diagonal entries are all non-negative), it follows that 
their matrix exponentials are (entry-wise) non-negative, triangular matrices 
and then their product $\Phi$ is a (entry-wise) positive matrix whose spectral radius $\rho(\Phi)$ is an eigenvalue 
by the Perron-Frobenius Theorem\cite{berman-plemmons}. 
Consequently, to determine stability of $E_0$, we need to establish 
whether or not $\rho(\Phi)$ is inside the unit circle: If $\rho(\Phi)<1$, then $E_0$ is locally 
asymptotically stable. If $\rho(\Phi)>1$, then $E_0$ is unstable.
Summarizing, we have established 
\begin{stel}\label{loc-stab}
Let $p\in (0,1)$, and assume that $(\ref{small})$ holds. Then the steady state $E_0=(0,0,S^0)$ is locally 
stable for $(\ref{s1})-(\ref{s3})$ with $(\ref{forms})$ if $\rho(\Phi)<1$, but unstable if $\rho(\Phi)>1$.
\end{stel}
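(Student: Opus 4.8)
The plan is to read off the local stability of $E_0$ directly from the Floquet (characteristic) multipliers of the $\tau$-periodic linearization, exactly as set up in the paragraphs preceding the statement. First I would linearize $(\ref{s1})$–$(\ref{s3})$ at $E_0=(0,0,S^0)$, producing the $\tau$-periodic variational equation ${\dot z}=\mathcal A(t)z$ displayed above. The coefficient matrix $\mathcal A(t)$ is block lower-triangular: the third ($S$-)row contributes ${\dot z}_3=-Dz_3-\frac{f(S^0)}{Y}z_1$, which is decoupled from $(z_1,z_2)$ in the sense that the $(z_1,z_2)$-block does not see $z_3$. Hence the fundamental matrix inherits this triangular structure, its monodromy matrix over one period has spectrum $\{\exponent^{-D\tau}\}\cup\operatorname{spec}(\Phi)$, where $\exponent^{-D\tau}<1$, and $\Phi$ is the monodromy matrix of the $2\times2$ upper block. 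Because $k_d(t),k_l(t),k_g(t)$ are of bang-bang type $(\ref{forms})$, the coefficient matrix of that $2\times2$ block equals the constant matrix $A_1$ of $(\ref{matrices})$ on $[0,p\tau)$ and the constant matrix $A_2$ on $[p\tau,\tau)$; concatenating the two constant-coefficient flows over one period yields precisely $\Phi=\exponent^{(1-p)\tau A_2}\exponent^{p\tau A_1}$ of $(\ref{transition})$.

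Next I would record that $A_1,A_2$ are quasi-monotone (Metzler) and triangular, so $\exponent^{p\tau A_1}$ and $\exponent^{(1-p)\tau A_2}$ are entrywise non-negative triangular matrices whose product $\Phi$ is entrywise positive (the two triangular factors have opposite orientation, so no entry of the product vanishes). The Perron–Frobenius theorem \cite{berman-plemmons} then guarantees that the spectral radius $\rho(\Phi)$ is itself an eigenvalue of $\Phi$. Consequently the dichotomy at $\rho(\Phi)=1$ is the right one: if $\rho(\Phi)<1$ then every eigenvalue of $\Phi$, hence every Floquet multiplier of the full three-dimensional variational system, lies strictly inside the unit disk; if $\rho(\Phi)>1$ then $\Phi$ — and therefore the monodromy matrix of the linearization — has an eigenvalue strictly outside the unit disk.

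Finally I would invoke the standard principle of linearized stability for fixed points of the period-$\tau$ (Poincar\'e) map: $E_0$ is a fixed point of the time-$\tau$ map $P$ of $(\ref{s1})$–$(\ref{s3})$, and $DP(E_0)$ is exactly the monodromy matrix of the variational equation at $E_0$. If $\rho(DP(E_0))<1$, then $E_0$ is locally asymptotically stable for $P$, hence for the $\tau$-periodic flow; if $\rho(DP(E_0))>1$, then $E_0$ is unstable. Combining this with the preceding paragraph gives the two claims of Theorem~\ref{loc-stab}.

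I do not expect a serious obstacle, since the statement essentially repackages a computation already carried out. The only points requiring care are (i) the bookkeeping that the monodromy matrix of the full system factors through the decoupled $S$-equation and the $2\times2$ block, so that its spectrum is $\{\exponent^{-D\tau}\}\cup\operatorname{spec}(\Phi)$; and (ii) the verification that $\Phi$ is strictly positive, which is what makes Perron–Frobenius applicable and makes the threshold $\rho(\Phi)=1$ govern the sign pattern of all multipliers. Neither is difficult, but both should be stated explicitly to make the ``$\rho(\Phi)<1$ \dots\ $\rho(\Phi)>1$'' alternative rigorous.
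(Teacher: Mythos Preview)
Your proposal is correct and follows essentially the same approach as the paper: the paper's proof is precisely the discussion preceding the theorem statement (the theorem is introduced with ``Summarizing, we have established''), which computes the variational equation at $E_0$, notes the block-triangular structure giving the multiplier $\exponent^{-D\tau}$ plus the eigenvalues of $\Phi=\exponent^{(1-p)\tau A_2}\exponent^{p\tau A_1}$, observes that $A_1,A_2$ are quasi-monotone so $\Phi$ is positive, and invokes Perron--Frobenius to conclude that $\rho(\Phi)$ governs stability. Your write-up adds the explicit invocation of the linearized-stability principle for the Poincar\'e map, which the paper leaves implicit, but otherwise the arguments coincide.
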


Our main concern is knowing how $\rho(\Phi)$ varies as a continuous function of $p$ (this variation is continuous 
since eigenvalues of a matrix are continuous functions of its entries, and clearly the entries of $\Phi$ 
are continuous in $p$).  
For $p=0$ (never using antibiotic), and hence also for $p$ near $0$ by continuity of $\rho(\Phi)$, 
we have that $\rho(\Phi)=\rho(\exponent^{\tau A_2})=\exponent^{(f(S^0)-D)\tau}>1$. This is in accordance with 
Lemma $\ref{extreme}$, where it was shown that all solutions with $B_s(0)+B_p(0)>0$ 
converge to $(B_s^*,0,S^*)$, and thus $E_0=(0,0,S^0)$ must be unstable. 
For $p=1$ (using antibiotic continuously), and hence also for $p$ near $1$, we have 
that $\rho(\Phi)=\rho(\exponent^{\tau A_1})=\exponent^{-D\tau}<1$. This is in accordance with 
Lemma $\ref{extreme}$ as well because it was shown there that all solutions converge to $E_0$ in this case.

We can actually determine the dependence of $\rho(\Phi)$ on $p$ explicitely because 
fortunately, both $A_1$ and $A_2$ are diagonalizable (their eigenvalues are distinct), which 
simplifies the computation of their matrix exponentials somewhat. It is easily verified that 
$$
A_1=T_1D_1T_1^{-1}\textrm{ and } A_2=T_2D_2T_2^{-1},
$$
where 
$$
T_1=\begin{pmatrix}1-k_d-k_l&0\\k_l&1\end{pmatrix},\;\; D_1=\begin{pmatrix}(1-k_d-k_l)f(S^0)-D&0\\
0&-D \end{pmatrix}
$$
and
$$ 
T_2=\begin{pmatrix}1&k_g\\0&-(f(S^0)+k_g)\end{pmatrix},\;\; D_2=\begin{pmatrix}f(S^0)-D&0\\
0&-(k_g+D) \end{pmatrix},
$$
and thus using $(\ref{transition})$ that
$$
\Phi=T_2\exponent^{(1-p)\tau D_2}T_2^{-1}T_1\exponent^{p\tau D_1}T_1^{-1}.
$$
A lengthy algebraic calculation shows that the positive matrix 
$\Phi=\begin{pmatrix}\Phi_{11}&\Phi_{12}\\ \Phi_{21}&\Phi_{22}\end{pmatrix}$ is given by
\begin{eqnarray*}
c\Phi_{11}&=&-(1-k_d-k_l)(f(S^0)+k_g)\exponent^{\tau [f(S^0)-D-p(k_d+k_l)f(S^0)]}\\
&&+k_gk_l\left(\exponent^{(1-p)\tau (f(S^0)-D)}-
\exponent^{-(1-p)\tau (k_g+D)} \right)\left(\exponent^{-p\tau D}-\exponent^{p\tau [(1-k_d-k_l)f(S^0)-D]} \right),
\end{eqnarray*}
\begin{equation*}
c\Phi_{12}=
-k_g(1-k_d-k_l)\exponent^{-p\tau D}\left(\exponent^{(1-p)\tau (f(S^0)-D)}-\exponent^{-(1-p)\tau (k_g+D)} \right),
\end{equation*}
\begin{equation*}
c\Phi_{21}=k_l(f(S^0)+k_g)\exponent^{-(1-p)\tau (k_g+D)}\left(\exponent^{-p \tau D}-\exponent^{p \tau [(1-k_d-k_l)f(S^0)-D]} \right)
\end{equation*}
and 
\begin{equation*}
c\Phi_{22}=-(f(S^0)+k_g)(1-k_d-k_l)\exponent^{-\tau \left((1-p)k_g+D\right)}
\end{equation*}
where 
$$
c=-(1-k_d-k_l)(f(S^0)+k_g)
$$
is a positive constant, independent of $p$.

Since $\Phi$ is a positive matrix, its 
spectral radius can now be calculated explicitly in terms of its entries:
\begin{equation}\label{formule}
\rho(\Phi)=\frac{\Phi_{11}+\Phi_{22}+\sqrt{(\Phi_{11}-\Phi_{22})^2+4\Phi_{12}\Phi_{21}}}{2}.
\end{equation}

\section{Conditions for treatment failure}

In this section we show that the spectral radius $\rho(\Phi)$ also plays a key role in the global behavior of 
system $(\ref{s1})-(\ref{s3})$ with $(\ref{forms})$ and $p\in (0,1)$. 
We will show that if $\rho(\Phi)>1$, then not only is $E_0$ unstable as we have shown in Theorem $\ref{loc-stab}$, but 
treatment fails globally, because both cell populations persist uniformly. 
In addition we will show that there are positive periodic solutions.

\begin{stel}\label{main-persist}
Let $p\in (0,1)$, and assume that $(\ref{small})$ holds. 
If $\rho(\Phi)>1$, then treatment fails and the population is uniformly persistent, i.e. there is 
some $\epsilon^*>0$ (independent of initial conditions), such that all solutions of $(\ref{s1})-(\ref{s3})$ with 
$(\ref{forms})$ and $B_s(0)>0$, have the property that:
$$
B_s(t)>\epsilon^*,\textrm{ and } B_p(t)>\epsilon^*,\;\;\textrm{ for all sufficiently large } t.
$$
Moreover, there are $\tau$-periodic solutions $(B_s(t),B_p(t), S(t))$ with $B_s(t), B_p(t)>0$ for all $t$.
\end{stel}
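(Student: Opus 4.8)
The plan is to recast the assertion as a uniform-persistence statement for the $\tau$-periodic semiflow of $(\ref{s1})$--$(\ref{s3})$ with $(\ref{forms})$ and then to invoke the standard acyclicity-based persistence machinery, the hypothesis $\rho(\Phi)>1$ being used only to show that $E_0$ repels the interior. Set $X=\reals^3_+$, let $X_0=\{(B_s,B_p,S)\in X:B_s>0\}$, $\partial X_0=X\setminus X_0$, and let $P\colon X\to X$ be the period (time-$\tau$) map. Equation $(\ref{s1})$ gives ${\dot B_s}\ge[(1-k_d(t)-k_l(t))f(S)-D]B_s$, so $X_0$ is forward invariant, and in fact $B_s(0)>0$ forces $B_s(t)>0$ for all $t$ while $(\ref{s2})$ makes $B_p$ positive after the first ``rest'' subinterval; hence $(B_s(t),B_p(t))\gg0$ for $t\ge\tau$. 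By Lemma~$\ref{dissipative}$ the semiflow is dissipative, so it has a compact global attractor and $P$ is compact and point dissipative; the ambient hypotheses of the persistence theorems are thus in place.

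Next I would analyse the boundary flow. Although $\partial X_0$ is not itself forward invariant, what matters is the largest invariant set inside $M_\partial:=\{x\in\partial X_0:P^nx\in\partial X_0\text{ for all }n\ge0\}$. If a full orbit satisfies $B_s\equiv0$, then $(\ref{s1})$ reads ${\dot B_s}=k_g(t)B_p$ on rest subintervals, which is positive unless $B_p\equiv0$; so such an orbit has $B_p\equiv0$, whence $(\ref{s3})$ becomes ${\dot S}=D(S^0-S)$ and $S\equiv S^0$. Therefore the largest invariant set in $M_\partial$ is the single fixed point $\{E_0\}$, automatically isolated and acyclic. It remains to verify the decisive hypothesis $W^s(E_0)\cap X_0=\emptyset$, in the uniform form: there is $\delta>0$ with $\limsup_{t\to\infty}\|(B_s(t),B_p(t),S(t))-E_0\|\ge\delta$ for every solution with $B_s(0)>0$.

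This is the step I expect to be the main obstacle, and it is the one place $\rho(\Phi)>1$ is used. Suppose some solution with $B_s(0)>0$ eventually stays in a $\delta$-ball about $E_0$; then $S^0-\delta<S(t)<S^0+\delta$ for large $t$, and since $f$ is increasing, $f(S^0-\delta)\le f(S(t))\le f(S^0+\delta)$ there. Writing the $(B_s,B_p)$-subsystem as ${\dot B}={\hat A}(t)B$, the matrix ${\hat A}(t)$ is Metzler (its off-diagonal entries $k_g(t)$ and $k_l(t)f(S(t))$ are nonnegative), and one checks entrywise that ${\hat A}(t)\ge{\tilde A}_\delta(t)$, where ${\tilde A}_\delta(t)$ equals $\left(\begin{smallmatrix}(1-k_d-k_l)f(S^0+\delta)-D&0\\ k_lf(S^0-\delta)&-D\end{smallmatrix}\right)$ on the antibiotic subinterval and $\left(\begin{smallmatrix}f(S^0-\delta)-D&k_g\\ 0&-(k_g+D)\end{smallmatrix}\right)$ on the rest subinterval; note that $(\ref{lethal})$, i.e.\ $1-k_d-k_l<0$, is exactly what makes the $(1,1)$-entry a valid lower bound. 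These matrices are quasimonotone, so their period map ${\tilde\Phi}_\delta:=\exponent^{(1-p)\tau{\tilde A}_{2,\delta}}\exponent^{p\tau{\tilde A}_{1,\delta}}$ is entrywise nonnegative, and entrywise positive for $\delta$ small since ${\tilde\Phi}_\delta\to\Phi$ as $\delta\to0$ and $\Phi$ is positive. By the comparison principle for cooperative linear systems, $(B_s,B_p)(k\tau+n\tau)\ge{\tilde\Phi}_\delta^{\,n}(B_s,B_p)(k\tau)$ for a suitably large integer $k$, with $(B_s,B_p)(k\tau)\gg0$; and since $\rho$ is continuous, $\rho({\tilde\Phi}_\delta)>1$ for $\delta$ small, so by Perron--Frobenius $\|{\tilde\Phi}_\delta^{\,n}(B_s,B_p)(k\tau)\|\to\infty$, contradicting that the solution stays in the $\delta$-ball. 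Hence $E_0$ is a uniform repeller for $X_0$.

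With all hypotheses checked, the persistence theorem yields $\eta>0$, independent of initial data, with $\liminf_{t\to\infty}B_s(t)\ge\eta$ for every solution with $B_s(0)>0$. Feeding this together with $S(t)\ge\theta$ from Lemma~$\ref{lower}$ into $(\ref{s2})$ on the antibiotic subintervals, where ${\dot B_p}\ge k_lf(\theta)\eta-DB_p$, and accounting for the at-most-exponential decay of $B_p$ over a rest subinterval, gives a uniform positive lower bound for $B_p$ as well; taking $\epsilon^*$ to be (half of) the smaller of the two bounds proves the uniform persistence claim. Finally, uniform persistence together with compactness and point dissipativity provides a compact global attractor $A_0\subset X_0$ for $P$, and since $X_0$ is convex and $P$ eventually maps into the compact set $A_0$, a fixed-point theorem (Horn's, or an asymptotic fixed-point theorem) yields a fixed point ${\bar x}\in X_0$ of $P$, that is, a $\tau$-periodic solution $(B_s(t),B_p(t),S(t))$ with $B_s(t),B_p(t)>0$ for all $t$.
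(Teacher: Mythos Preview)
Your proposal is correct and uses essentially the same ideas as the paper: both pivot on a comparison of the $(B_s,B_p)$-subsystem with a perturbed $\tau$-periodic cooperative linear system whose monodromy matrix ${\tilde\Phi}_\delta\to\Phi$ as $\delta\to0$, so that $\rho({\tilde\Phi}_\delta)>1$ forces blow-up and hence repulsion from $E_0$; both then deduce $B_p$-persistence from $B_s$-persistence via Lemma~\ref{lower} and a linear comparison in $(\ref{s2})$, and both finish with a fixed-point argument for the periodic solution.

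The packaging differs slightly. The paper first proves uniform \emph{weak} persistence of $B_s$ directly: assuming $B_s^\infty$ small, it applies the Fluctuation Lemma to $(\ref{s3})$ to force $S_\infty\ge S^0-\epsilon/2$, and only then runs the cooperative comparison; strong persistence is obtained afterwards via Theorem~1.3.3 in \cite{zhao}. You instead set up the acyclicity framework, identify $\{E_0\}$ as the maximal compact invariant set in $M_\partial$, and verify the uniform-repeller hypothesis by assuming the full solution stays in a $\delta$-ball of $E_0$---which gives $|S-S^0|<\delta$ for free and bypasses the Fluctuation Lemma. Your route is marginally cleaner at that step; the paper's route avoids checking the boundary structure. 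Both are standard and equivalent here.

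One small slip: it is on the \emph{dosing} subinterval (where $k_l(t)=k_l>0$), not the rest subinterval, that $(\ref{s2})$ pushes $B_p$ positive. Your conclusion $(B_s(t),B_p(t))\gg0$ for $t\ge\tau$ (in fact for $t>0$) is unaffected.
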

\begin{proof}
Define the following matrix:
$$
{\tilde \Phi}(\epsilon)=\exponent^{(1-p)\tau {\tilde A}_2(\epsilon)}\exponent^{p\tau {\tilde A}_1(\epsilon)},
$$
where
$$
{\tilde A}_1(\epsilon):=\begin{pmatrix}f(S^0-\epsilon)-(k_d+k_l)f(S^0+\epsilon)-D&0\\k_lf(S^0-\epsilon)&-D\end{pmatrix},
\textrm{ and } 
{\tilde A}_2(\epsilon):=\begin{pmatrix}f(S^0-\epsilon)-D&k_g\\0&-(k_g+D) \end{pmatrix}
$$
Notice that ${\tilde \Phi}(0)=\Phi$, and thus since $\rho(\Phi)>1$, it follows that 
\begin{equation}\label{perturbation}
\rho({\tilde \Phi}(\epsilon))>1,\;\; \textrm{for all sufficiently small } \epsilon>0,
\end{equation}
as well, because the spectral radius of any matrix is continuous with respect to its entries. We fix some $\epsilon>0$ 
such that $(\ref{perturbation})$ holds. 

We will first show that $B_s$ is uniformly weakly persistent, i.e. that there is some 
$\epsilon'>0$ such that if $B_s(0)>0$, then $B_s^\infty\geq \epsilon'$. 
By contradiction, if $B_s$ is not uniformly weakly persistent, then there is some solution $(B_s(t),B_p(t),S(t))$ with 
$B_s(0)>0$ such that 
\begin{equation}\label{bound}
B_s^{\infty}\leq \frac{YD}{2f(S^0)}\epsilon.
\end{equation}

By Corollary $2.4$ in \cite{thieme} applied to $(\ref{s3})$, 
and since $S^{\infty}\leq S^0$ by $(\ref{s3})$, we have that
\begin{eqnarray*}
0&\geq&\liminf_{t\rightarrow \infty}\left[ D(S^0-S_{\infty})- \frac{f(S_{\infty})B_s(t)}{Y} \right]\\
&\geq&D(S^0-S_{\infty})- \frac{f(S^0)B_s^{\infty}}{Y},\\
\end{eqnarray*}
and hence by $(\ref{bound})$ that 
$$
S_{\infty}\geq S^0-\frac{\epsilon}{2}.
$$
Thus, for some $T^*>0$, there holds that $S^0-\epsilon\leq S(t)\leq S^0+\epsilon$ for all $t\geq T^*$. 
It follows from $(\ref{s1})-(\ref{s2})$, that for all $t\geq T^*$:
\begin{equation}\label{compare}
\begin{pmatrix}
{\dot B_s}\\
{\dot B_p}
\end{pmatrix}
\geq
\begin{pmatrix}
f(S^0-\epsilon)-(k_d(t)+k_l(t))f(S^0+\epsilon)-D&k_g(t)\\
k_l(t)f(S^0-\epsilon)&-(k_g(t)+D)
\end{pmatrix}
\begin{pmatrix}
B_s\\
B_p
\end{pmatrix}
\end{equation}
where the vector inequalities should be interpreted componentwise. Notice that the vector field 
on the right-hand side of $(\ref{compare})$ is that of a $\tau$-periodic, cooperative linear system 
whose principal fundamental matrix solution evaluated over one period $\tau$ equals 
${\tilde \Phi}(\epsilon)$. By Kamke's comparison Theorem 
(see e.g. Theorem B.1 in Appendix B of \cite{smith-waltman}) it follows that for all $t\geq T^*$, the vector 
$(B_s(t),B_p(t))^T$ is not smaller (component-wise) than the solution starting in $(B_s(T^*),B_p(T^*))^T$ 
of the $\tau$-periodic, cooperative linear system with vector field given in the right-hand side of $(\ref{compare})$. 
But all non-zero, non-negative solutions of the linear system diverge because $\rho({\tilde \Phi}(\epsilon))>1$. 
Then so does $(B_s(t), B_p(t))$, and this contradicts $(\ref{bound})$.
We have thus shown that $B_s$ is uniformly weakly persistent.

Next we establish that $B_s$ is in fact uniformly strongly persistent. This follows from 
Theorem $1.3.3$ in \cite{zhao}, applied to the map $P$ which maps $(B_s(0),B_p(0),S(0))^T\in X$ to 
$(B_s(\tau), B_p(\tau),S(\tau))^T$, where  
$X:=\{(B_s, B_p, S)^T\in \reals^3_+\; | \; B_s+B_p+YS\leq YS^0\;\}$, $X_0:=\{(B_s, B_p, S)^T\in X\; |\; 
B_s\neq 0\}$ and $\partial X_0:=\{(B_s, B_p, S)^T\in X\; |\; B_s=0\}$. The map $P$ is continuous and maps 
$X_0$ into itself, and it has a global attractor because it is compact and dissipative. It follows that there is some 
$\epsilon_1^*>0$, independent of initial conditions, such that if $B_s(0)>0$, then $\liminf_{n \rightarrow \infty} 
B_s(n\tau)> \epsilon_1^*$, and also that $\liminf_{t\rightarrow \infty}B_s(t)>\epsilon_1^*$ by 
Theorem $3.1.1$ in \cite{zhao}.

Next we show that uniform strong persistence of $B_s$, implies uniform strong persistence of $B_p$. Consider 
equation $(\ref{s2})$ and notice that for all sufficiently large $t$:
$$
{\dot B_p}\geq k_l(t)f(\theta)B_s-(k_g(t)+D)B_p\geq k_l(t)f(\theta)\frac{\epsilon_1^*}{2}-(k_g(t)+D)B_p,
$$
where $\theta$ is the positive constant from Lemma $\ref{lower}$. 
It is not hard to show that the linear equation
$$
{\dot z}=k_l(t)f(\theta)\frac{\epsilon_1^*}{2}-(k_g(t)+D)z,
$$
has a positive $\tau$-periodic solution $p(t)$ and that all non-negative solutions converge to it. Therefore, it follows that 
for all sufficiently large $t$,
$$
B_p(t)\geq \frac{p_{\infty}}{2},
$$
establishing uniform strong persistence for $B_p$, since $p_{\infty}$ is independent of initial conditions. We conclude 
the proof of uniform strong persistence of $B_s$ and $B_p$ by setting $\epsilon^*=\min \{\epsilon_1^*,\frac{p_{\infty}}{2}\}$. 

Finally, to show that there are $\tau$-periodic solutions with $B_s(t),B_p(t)>0$, we apply Theorem $1.3.6$ from \cite{zhao} applied 
to the continuous map $P$ defined above. We have already remarked that this map is continuous, maps 
$X_0$ into itself, is dissipative and compact, and 
we have just proved that it is uniformly strongly 
persistent with respect to $(X_0,\partial X_0)$. Observe also that $X_0$ is relatively open in $X$, 
and that $X_0$ is convex. Then by Theorem $1.3.6$ from \cite{zhao}, the map $P$ has a fixed point in $X_0$, and this 
in turn implies the existence of $\tau$-periodic solutions with $B_s(t)>0$ for $(\ref{s1})-(\ref{s3})$. 
The same argument as the one used above to establish uniform strong persistence of $B_p(t)$, shows that these 
$\tau$-periodic solutions are such that $B_p(t)>0$ as well.

\end{proof}

\section{Numerical example}
We use the following numerical values:
The per capita growth rate is of Michaelis-Menten type:
$$
f(S)=\frac{\mu S}{k_s+S},\textrm{ where } \mu=0.417 \textrm{ hs}^{-1}\textrm{ and } k_s=0.2\textrm{ mgl}^{-1},
$$
and the parameters 
$$
k_d=3,\; k_l=0.1,\; k_g=0.5\textrm{ h}^{-1},\textrm{ and } \tau=10\textrm{ h}.
$$

The chemostat setting requires that we specify two additional parameters:
$$
D=0.1\textrm{ h}^{-1}\textrm{ and } S^0=1\textrm{ mgl}^{-1}.
$$
It can be easily verified that these choices satisfy the conditions $(\ref{lethal})$ and $(\ref{small})$.
Note that there is no need to specify the yield coefficient $Y$ in order to calculate $\rho(\Phi)$. Indeed, the matrix 
$\Phi$ in $(\ref{transition})$ does not depend on $Y$.

The graph of the spectral radius $\rho(\Phi)$ in terms of $p$, determined using formula $(\ref{formule})$, 
is given in Figure $\ref{spec}$. We see that $\rho(\Phi)=1$ when $p$ is approximately equal to $0.242$. 
Clearly, $\rho(\Phi)$ is not monotone, and it has a global mimimum of approximately $0.098$ 
which is achieved at $p\approx 0.612$ (determined numerically using Mathematica). 
Thus, the optimal strategy in a dosing experiment with a period of $\tau=10$ hours 
occurs for a dosing duration of $p\tau\approx 6.12$ hours. Here, optimality means that 
eradication happens as quickly as possible.
\begin{figure}
\centerline{
\includegraphics[width=12cm]{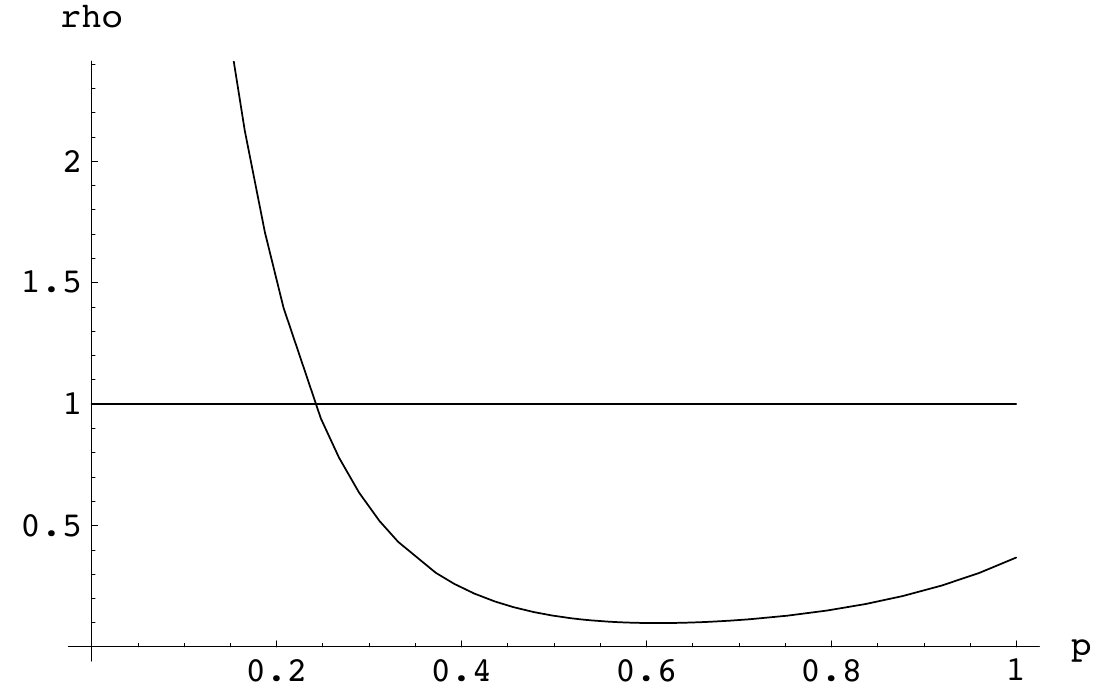}}
\caption{Spectral radius of $\Phi$ for $p\in [0,1]$ ($k_d=3$, other parameters in text.)}\label{spec}
\end{figure}

Let us also illustrate what happens if $p$ equals approximately $0.205$. Then 
$\rho(\Phi)$ equals approximately $1.436$, implying that treatment fails. 
It appears that the solution of $(\ref{s1})-(\ref{s3})$ with $Y=1$ 
(and all other parameters as above) starting from the 
initial condition $(B_s,B_p,S)=(0.3,0,0.4)$ converges to a 
$\tau$-periodic solution, see Figures $\ref{timeseries1}$ and $\ref{timeseries2}$. These observations 
are in accordance with Theorem $\ref{main-persist}$.

\begin{figure}
\centering
\begin{tabular}{cc}
\includegraphics[height=50mm]{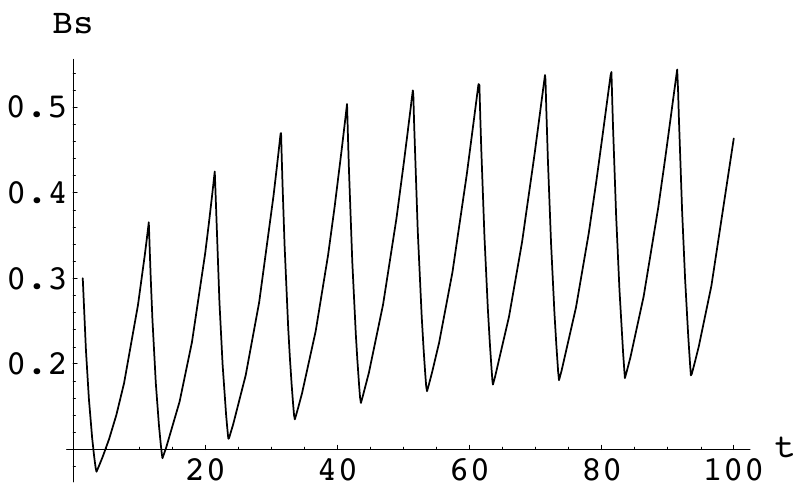}&
\includegraphics[height=50mm]{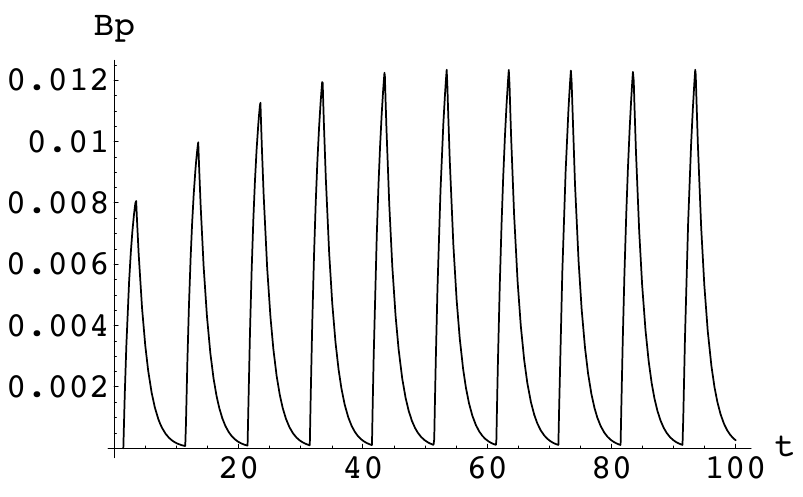}
\end{tabular}
\caption{Times series for $B_s$ and $B_p$. ($p\approx 0.205$, $\rho(\Phi)\approx 1.436$ so treatment 
fails)}\label{timeseries1}
\end{figure}

\begin{figure}
\centering
\begin{tabular}{cc}
\includegraphics[height=50mm]{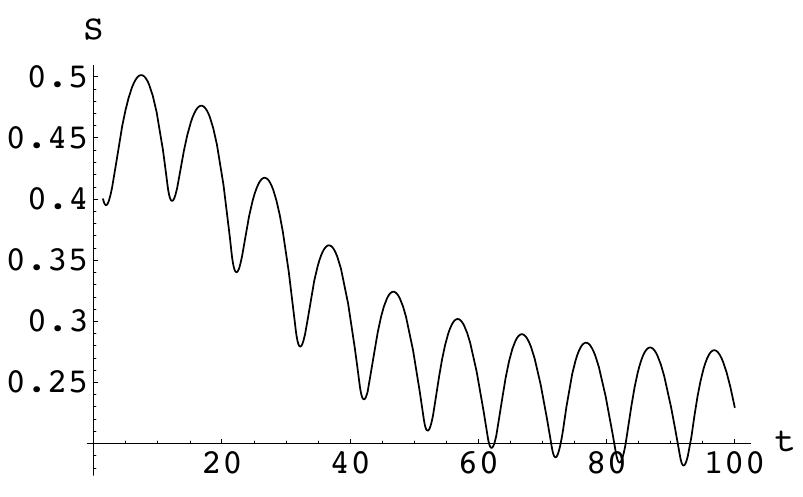}&
\includegraphics[height=50mm]{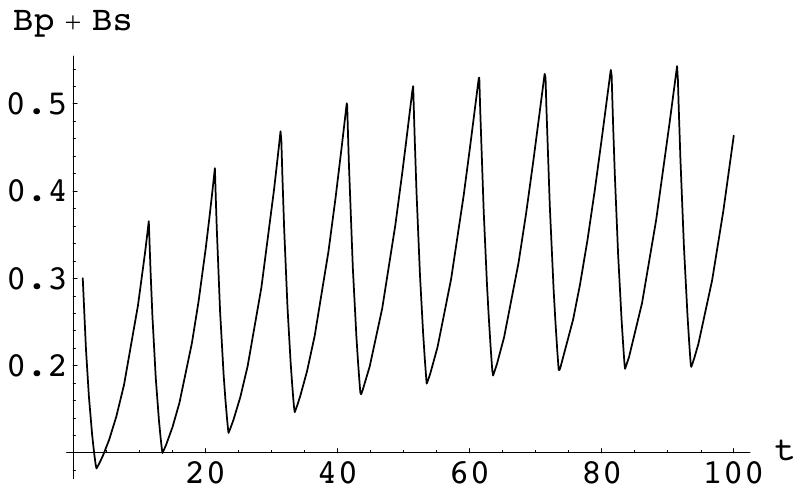}
\end{tabular}
\caption{Times series for $S$ and $B_s+B_p$. ($p\approx 0.205$, $\rho(\Phi)\approx 1.436$ so treatment 
fails)}
\label{timeseries2}
\end{figure}

We also see that if $p=\frac{1}{3}$, then $\rho(\Phi)$ equals approximately $0.426$. 
Then it follows from Theorem $\ref{loc-stab}$ that $E_0$ is locally stable. Figures 
$\ref{timeseries3}$ and $\ref{timeseries4}$ illustrate this by indicating that for the 
solution with the same initial condition $(B_s,B_p,S)=(0.3,0,0.4)$ as above, treatment 
is successful.

\begin{figure}
\centering
\begin{tabular}{cc}
\includegraphics[height=50mm]{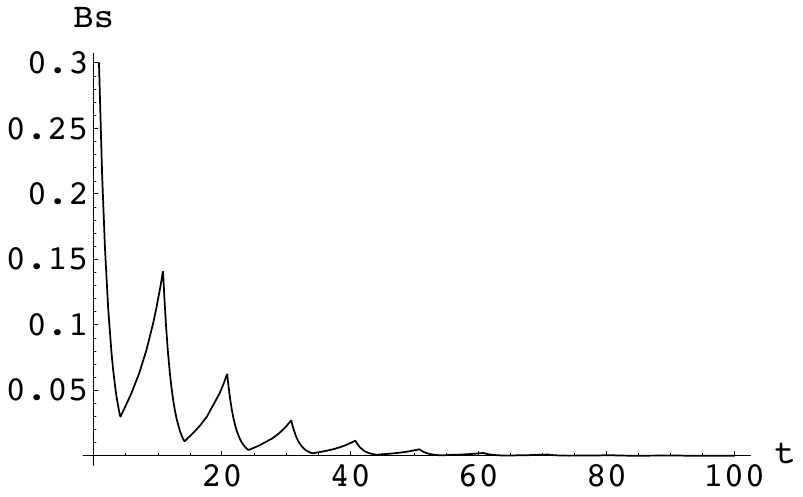}&
\includegraphics[height=50mm]{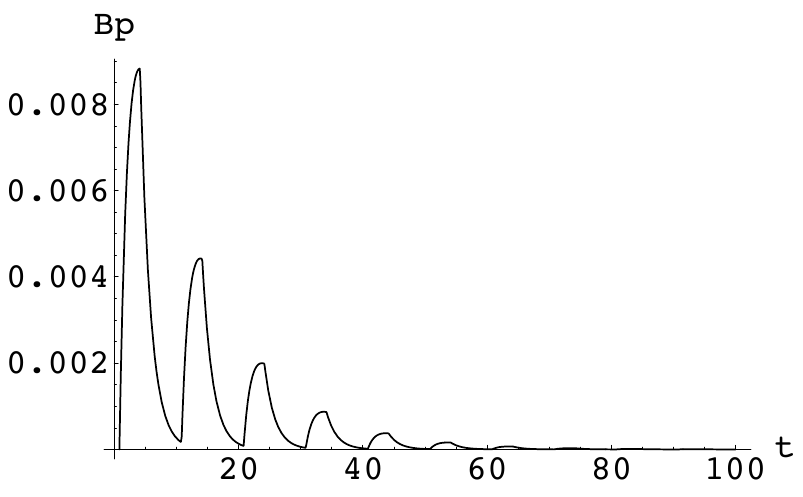}
\end{tabular}
\caption{Times series for $B_s$ and $B_p$. ($p=\frac{1}{3}$, $\rho(\Phi)\approx 0.426$, so treatment 
succeeds)}\label{timeseries3}
\end{figure}

\begin{figure}
\centering
\begin{tabular}{cc}
\includegraphics[height=50mm]{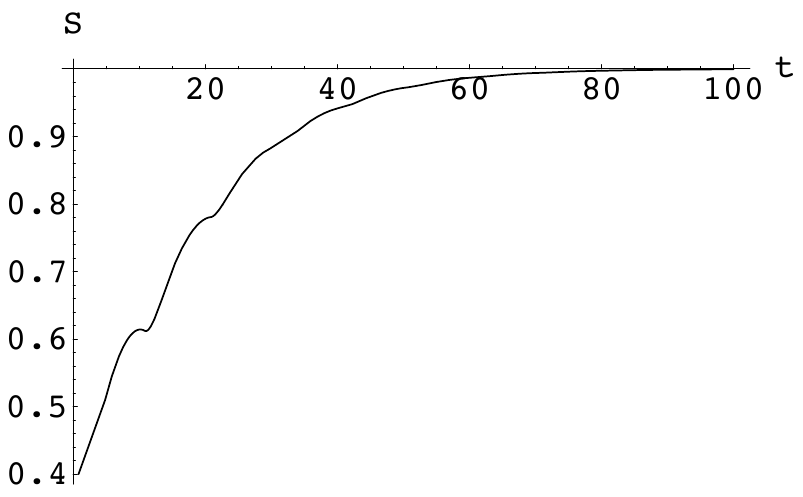}&
\includegraphics[height=50mm]{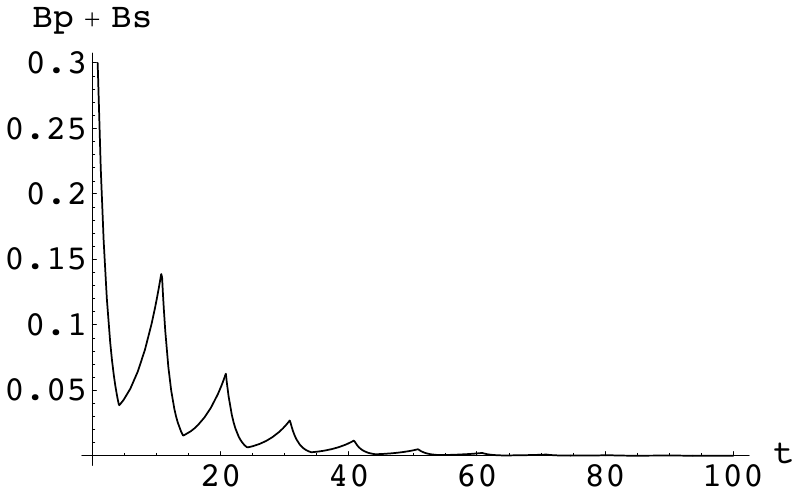}
\end{tabular}
\caption{Times series for $S$ and $B_s+B_p$. ($p=\frac{1}{3}$, $\rho(\Phi)\approx 0.426$, so treatment 
succeeds)}
\label{timeseries4}
\end{figure}

Finally we remark that the spectral radius of $\Phi$ may be monotone: 
If we change the value of $k_d$ from $3$ to $1$, and leave all 
other parameters unchanged, then $\rho(\Phi)$ is a decreasing function of $p\in [0,1]$, 
and achieves its minimum at $p=1$. This indicates 
that for this case, the optimal strategy is to use antibiotics continuously, see Figure $\ref{spec-mon}$.

\begin{figure}
\centerline{
\includegraphics[width=12cm]{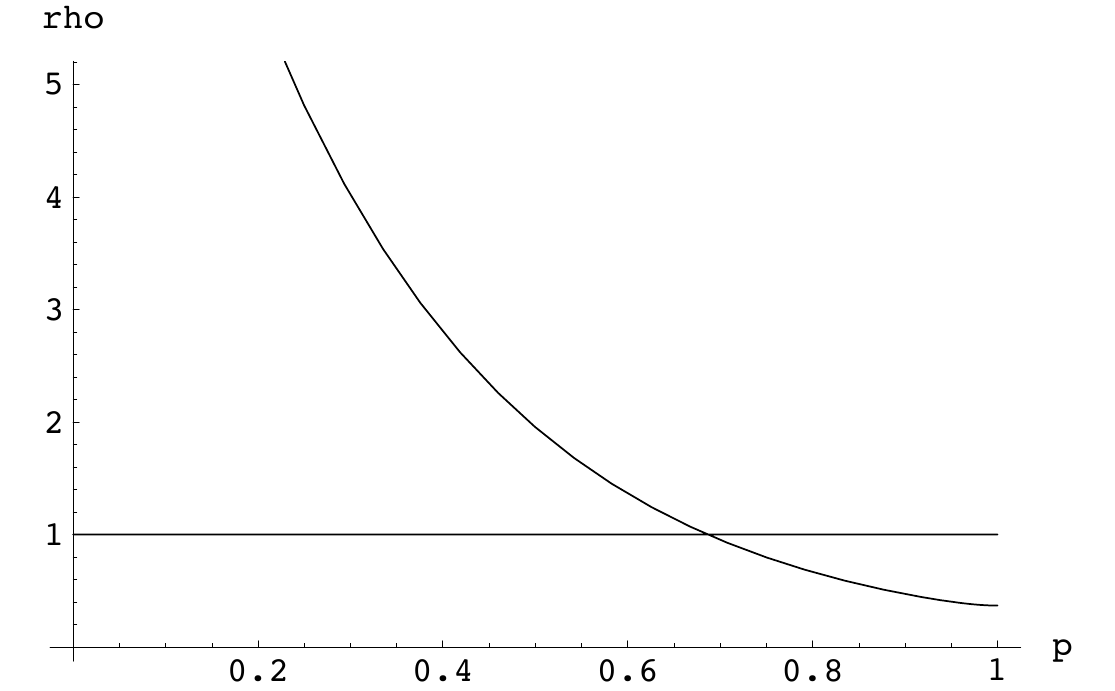}}
\caption{Spectral radius of $\Phi$ for $p\in [0,1]$ ($k_d=1$, other parameters in text).}\label{spec-mon}
\end{figure}

\section{Discussion}

Bacterial infections are a source of problems in a wide variety of situations including industrial, environmental and clinical settings.  Growing understanding of the inability of antibiotics and biocides to treat these infections has driven investigations into the cause of the failure of treatments. It is becoming increasingly evident that persister cells must play an important role in protecting populations of bacteria. It has been observed that other protective mechanisms including physiological and physical processes are not sufficient to explain the observed failures \cite{cogan_phys, chambliss}. Moreover, because it is very difficult to investigate and classify persister cells experimentally, mathematical modeling can play an important role in supporting hypotheses as well as generating useful predictions.

We have described and analyzed a general model for the dynamics of persister formation in response to antibiotic challenge. We have been able to 
provide a condition for the success/failure of antibiotic challenge in a chemostat. 
We have found that there is an optimal strategy, that is one that kills the bacteria the fastest.  These results indicate that periodic dosing is an effective treatment protocol for a variety of bacteria, substantially strengthening the results in \cite{cogan}.  The theoretical results were also confirmed by direct numerical simulations.


\begin{thebibliography}{199}




 

\bibitem{klapper_senescence2}
{\sc   B~.P~. Ayati, I.~Klapper}, {\em A Multiscale Model of Biofilm as a Senescence-Structured Fluid},  SIAM Multi. Model. Sim. 6 (2007) pp.~347-365


\bibitem{balaban}
{\sc N.~Q. Balaban, J.~Merrin, R.~Chait, L.~Kowalik, and S.~Leibler}, {\em
  Bacterial persistence as a phenotypic switch}, Science, 305 (2005),
  pp.~1622--1625.



\bibitem{berman-plemmons} {\sc A. Berman, and R. Plemmons}, {\em Nonnegative matrices in the 
mathematical sciences}, SIAM, 1994. 

\bibitem{chambliss}
{\sc J.~D. Chambliss, S.~M. Hunt, and P.~S. Stewart}, {\em A three-dimensional
  computer model of four hypothetical mechanisms protecting biofilms from
  antimicrobials}, Appl. Environ. Microbiol., 72 (2006), pp.~2005--2013.


\bibitem{cogan} {\sc N. G. Cogan}, {\em Effects of persister formation on bacterial response dosing}, Journal of Theoretical 
Biology 238 (2006), pp.~694-703.

\bibitem{cogan_ta}
{\sc N.~ G.~ Cogan}, {\em Incorporating Toxin Hypothesis into a Mathematical Model of Persister Formation and Dynamics}, Journal of Theoretical Biology 248 (2007): 340-349 

\bibitem{cogan_phys}
{\sc N.~G.~ Cogan, Ricardo Cortez and Lisa J.~ Fauci}, {\em Modeling Physiological Resistence in Bacterial Biofilms}, Bulletin of Mathematical Biology 67 (4) (2005) , pp.~ 831-853 

\bibitem{cf}
{\sc J.~Costerton}, {\em Cystic fibrosis pathogenesis and the role of biofilms
  in persistent infection}, Trends Microbiol., 9 (2001), pp.~50--52.

  
  \bibitem{desai}
{\sc M.~Desai, T.~Buhler, P.~Weller, and M.~Brown}, {\em Increasing resistance
  of planktonic and biofilm cultures of \it{{B}urkholderia cepecia}x to
  ciproflaxacin and ceftazidime during exponential growth}, Journal of
  Antimicrobial Chemotherapy, 42 (1998), pp.~153--160.

\bibitem{davies}
{\sc D.~Davies}, {\em Understanding biofilm resistance to antibacterial
  agents}, Nature Reviews Drug Discovery, 2 (2003), pp.~114--122.
  
  
  \bibitem{resistance_model_2}
{\sc M.~G. Dodds, K.~J. Grobe, and P.~S. Stewart}, {\em Modeling biofilm
  antimicrobial resistance}, Biotechnology and Bioengineering, 68 (2000),
  pp.~456--465.


\bibitem{imran} {\sc M. Imran, and H.L. Smith}, 
{\em The pharmacodynamics of antibiotic treatment}, 
Journal of Computational and Mathematical Methods in Medicine 7(2006), pp.~229 - 263.

 \bibitem{keren}
{\sc I.~Keren, N.~Kaldalu, A.~Spoering, Y.~Wang, and K.~Lewis}, {\em Persister
  cells and tolerance to antimicrobials}, FEMS Microbiology Letters, 230
  (2004), pp.~13--18.


\bibitem{klapper_senescence}
{\sc I.~Klapper, P.~Gilbert, B.~P. Ayati, J.~Dockery, and P.~S. Stewart}, {\em Senescence can explain microbial persistence},  Microbiology, 153(2007) pp.~3623-3630.


\bibitem{phys_resistance}
{\sc H.~M. Lappin-Scott and J.~W. Costerton}, eds., {\em Microbial Biofilms},
  Cambridge University Press, Cambridge, 1995, ch.~Mechanisms of the Protection
  of Bacterial Biofilms from Antimicrobial Agents, pp.~118--130.

\bibitem{Lewis}
{\sc K.~Lewis}, {\em Riddle of biofilm resistance}, Antimicrobial Agents and
  Chemotherapy, 45 (2001), pp.~999--1007.


\bibitem{lewis_ta}
{\sc K.~Lewis}, {\em Persister cells and the riddle of biofilm survival},
  Biochemistry - Moscow, 70 (2005), pp.~267--285.


\bibitem{resistance_chromo}{\sc Harold C. Neu}, {\em The Crisis in Antibiotic Resistance}, Science, 257(5073), 1992.

\bibitem{senscence1}
{\sc E.~J.~Stewart and R.~ Madden and G.~ Paul and F.~ Taddei}, {\em Aging and death in an
organism that reproduces by morphologically symmetric division}, PLoS Biology 3 (2005), pp.~295-300

\bibitem{smith-waltman} {\sc H.L. Smith, and P. Waltman}, {\em The Theory of the 
Chemostat}, Cambridge University Press, 1995.

\bibitem{lewis_planktonic}
{\sc A.~ Spoering and K.~Lewis},{\em Biofilms and Planktonic Cells of Pseudomonas aeruginosa Have Similar Resistance to Killing by Antimicrobials}, Journal of Bacteriology, 183(23) (2001), pp.~6746-6751.


\bibitem{barbara}
{\sc B.~Szomoloy and I.~Klapper and J.~Dockery and P.~Stewart},{\em Adaptive responses to antimicrobial agents in biofilms}, Environmental Microbiology, 7(8) (2005), pp.~1186-1191

\bibitem{thieme} {\sc H.R. Thieme}, {\em Persistence under relaxed point-dissipativity (with applications to an endemic model)}, 
SIAM Journal of Mathematical Analysis 24(1993), pp.~407-435.

\bibitem{zhao} {\sc X.-Q. Zhao}, {\em Dynamical systems in population biology}, Springer, New York, 2003.

\end{thebibliography}
\end{document}